\documentclass[11pt]{llncs}
\usepackage[left=1in,right=1in,top=1in,bottom=1in]{geometry}

\usepackage[english]{babel} 
\usepackage{amsmath,amssymb,amsfonts,mathrsfs}
\usepackage[T1]{fontenc}
\usepackage[utf8]{inputenc}
\usepackage{euscript}
\usepackage{dsfont}
\usepackage{booktabs}
\usepackage{url}

\usepackage{parskip}
\usepackage[ruled,vlined,linesnumbered]{algorithm2e}
\usepackage[hidelinks]{hyperref}
\usepackage{algpseudocode}


\spnewtheorem{assumption}{Assumption}{\bfseries}{\itshape}
\spnewtheorem*{lemma*}{Lemma}{\normalshape\bfseries}{\itshape}
\spnewtheorem*{theorem*}{Theorem}{\normalshape\bfseries}{\itshape}

\newcommand{\eqdef}{\triangleq}
\renewcommand{\leq}{\leqslant}

\renewcommand{\geq}{\geqslant}
\renewcommand{\ge}{\geqslant}
\renewcommand{\epsilon}{\varepsilon}
\newcommand{\rg}[2]{\left \{ #1,\dots,#2  \right \}}

\newcommand{\raffect}{\xleftarrow{\$}}


\newcommand{\qnorm}[1]{\left \lvert  {#1}  \right \rvert_q} 
\newcommand{\support}[2]{\langle {#2} {\rangle}_{{#1}}}

\newcommand{\F}{\mathbb{F}}

\newcommand{\fq}{\F_q}
\newcommand{\fqm}{\F_{q^m}}
\newcommand{\fql}{\F_{q^\ell}}
\newcommand{\set}[1]{\mathcal{#1}}
\newcommand{\card}[1]{\left \lvert  {#1} \right \rvert }

\newcommand{\GL}{\mathsf{GL}}
\newcommand{\MS}[3]{#3^{#1 \times #2}}

\newcommand{\grassman}[3]{\boldsymbol{\mathrm{Gr}}_{#1}(#2,#3)}

\DeclareMathOperator{\rk}{\mathrm{rank}} 
\DeclareMathOperator{\pub}{pub} 

\newcommand{\word}[1]{\boldsymbol{\mathrm{#1}}}
\newcommand{\av}{\word{a}}
\newcommand{\bv}{\word{b}}
\newcommand{\cv}{\word{c}}

\newcommand{\ev}{\word{e}}

\newcommand{\gv}{\word{g}}
\newcommand{\hv}{\word{h}}

\newcommand{\mv}{\word{m}}

\newcommand{\sv}{\word{s}}

\newcommand{\vv}{\word{v}}

\newcommand{\xv}{\word{x}}
\newcommand{\yv}{\word{y}}

\newcommand{\lambdav}{\word{\lambda}}

\newcommand{\mat}[1]{\boldsymbol{\mathrm{#1}}}

\newcommand{\Gm}{\mat{G}}
\newcommand{ \Hm}{\mat{H}}
\renewcommand{\Im}{\mat{I}}
\newcommand{\Mm}{\mat{M}}

\newcommand{\Pm}{\mat{P}}
\newcommand{\Qm}{\mat{Q}}

\newcommand{\Sm}{\mat{S}}
\newcommand{\Tm}{\mat{T}}

\newcommand{\Xm}{\mat{X}}

\newcommand{\ZZ}{\mat{0}}
\newcommand{\Deltam}{\mat{\Delta}}

\newcommand{\CC}{\mathcal{C}}
\newcommand{\VC}{\mathcal{V}}

\newcommand{\FC}{\mathcal{F}}

\newcommand{\gab}{\mathcal{G}ab}

\pagestyle{plain}

\begin{document}
\title{Security Assessment  of the LG-Cryptosystem}	
\author{ \'Etienne Burle\inst{1} \and Herv\'e Tal\'e Kalachi\inst{2} \and Freddy Lende Metouke\inst{2} \and Ayoub Otmani\inst{1} }

\institute{Univ Rouen Normandie, 
Normandie Univ, LITIS UR 4108, F-76000 Rouen, France
\email{\{Etienne.Burle,Ayoub.Otmani\}@univ-rouen.fr}\\
\and 
University of Yaounde I, Yaounde, Cameroon\\
\email{\{metoukefreddy,hervekalachi\}@gmail.com}
}

\maketitle

\begin{abstract}
The LG cryptosystem is a public-key encryption scheme in the rank metric using the recent family of $\lambdav-$Gabidulin codes and introduced in $2019$ by Lau and Tan. In this paper, we present a cryptanalysis showing that the security of several parameters of the scheme  have been overestimated. We also show the existence of some weak keys allowing  an attacker to find in polynomial time an alternative private key.  

\keywords{Rank-metric codes \and LG cryptosystem \and Cryptanalysis \and Structural attack \and Weak keys}

\end{abstract}

\section{Introduction}
The first  cryptosystem based on linear codes was proposed by McEliece \cite{M78} in 1978, based on Goppa codes equipped with the Hamming metric. 
But, the sizes of its public key are very large and constitute a disadvantage. Over the years, a fairly active line of research has therefore aimed to suggest alternatives to the McEliece cryptosystem with reduced key sizes. Among the most notable alternatives is the idea of using linear codes with another metric, notably the rank metric.

Rank-metric codes were introduced by Delsarte \cite{D78} in 1978 followed by Gabidulin  \cite{G85} with the definition of Gabidulin codes that was used in \cite{GPT91} to construct the first cryptosystem based on rank-metric codes, and called the GPT cryptosystem. The  scheme can be seen as an analogue of the McEliece's one but using Gabidulin codes in place of Goppa codes \cite{M78}. But
the GPT cryptosystem has experienced a succession of attacks and  repairs.
The first attacks were presented by Gibson in \cite{G95,G96}, then by Overbeck's attacks \cite{O05a,O05,O08} and other related work \cite{HMR18,K22,OTN18} that dismantled almost all the existing variants of the GPT cryptosystem \cite{GRH09,G08,RGH10,RGH11}. All these attacks can be justified by the fact that Gabidulin codes are very structured, and therefore their use  always results in a public key that leaks sensitive information. 

Other families of rank metric codes have thus emerged this last decade, with the aim to replace Gabidulin codes with less structured rank metric codes. The first rank metric codes which came to replace Gabidulin codes are the family of Low Rank Parity Check codes introduced in 2013 by Gaborit, Murat, Ruatta and Zémor \cite{GMRZ13} with applications to cryptography. Their work was recently followed by the work of Lau and Tan \cite{LT19} introducing a new family of rank-metric codes called $\lambdav-$Gabidulin codes. Similar to Generalized Reed-Solomon codes \cite{SX11}, a $\lambdav-$Gabidulin code $\gab_{\lambdav}(\gv,k)$ is generated by a matrix that is a product of a generator matrix of the Gabidulin code $\gab(\gv,k)$ with a diagonal matrix. Lau and Tan presented an efficient decoding algorithm for $\gab_{\lambdav}(\gv,k)$ based on the decoding algorithm of $\gab(\gv,k)$. They also proposed the so-called LG cryptosystem using $\lambdav-$Gabidulin codes and claimed that their system is resistant to all known structural attacks on Gabidulin-based cryptosystems such as Overbeck's attack \cite{O07} and Frobenius weak attack \cite{HMR16}. 

In this paper, we exploit the structure and the properties of Gabidulin codes to present a new structural attack on the LG cryptosystem. While the complexity of our attack is not polynomial, it appears that several parameters of the LG cryptosystem do not have the claimed security. We also show the existence of some keys deemed weak allowing the attacker find an alternative private key in polynomial time. At the core of our security analysis is our observation that the LG cryptosystem  hides the structure of a Gabidulin code by means of an invertible matrix whose inverse lies in $\GL_n(\set{V})$ where $\set{V}$ is a  $\fq-$vector subspace of $\fqm$ of dimension at most $3$. Hence, the LG cryptosystem is a particular instance of the Loidreau's cryptosystem \cite{L17} which masks Gabidulin codes by multiplying a generator matrix with the inverse of a homogeneous matrix whose rank weight is extremely low. We then apply a similar technique first proposed in \cite{BL23} to improve the best known attacks on the LG cryptosystem.

The rest of the paper is organized as follows. In Section \ref{sec:def}, we give the notation, properties and mathematical notions that will be needed throughout the paper. We particularly recall the definitions of $\lambdav-$Gabidulin codes and the description of the LG cryptosystem. In Section \ref{sec:cryptanalysis}, we present the new cryptanalysis of the LG cryptosystem and also present our attack on some keys that can be viewed as weak in Section \ref{sec:weak}. 

\section{Notation and Preliminaries} \label{sec:def}

The  symbol  $\eqdef$ is used to define  the left-hand side object. $\card{\set{S}}$ defines the cardinality of a set $\set{S}$.
We shall write $x \raffect \set{S}$ to express that $x $ is sampled  according to the uniform distribution over  a set  $\set{S}$.  
The finite field with $q$ elements where $q$ is a power of a prime number 
 is written as $\fq$.
All vectors will be regarded by default as row
vectors and  denoted by boldface letters like $\av = (a_1,\dots{},a_n)$. 
The linear space over a field $\fq$  spanned by vectors $\bv_1,\dots,\bv_k$   belonging 
to a vector space over a field containing $\fq$ is written as $\support{\fq}{\bv_1,\dots,\bv_k}$.
 The set of $r \times n$ matrices with entries in a set $\set{V} \subseteq \fq$ is denoted by $\MS{r}{n}{\set{V}}$ and
$\GL_n(\set{V})$ is the subset of $\MS{n}{n}{\set{V}}$  of invertible matrices.
The identity matrix of size $n$ is written as $\Im_n$. 
The transpose operation  is denoted by the symbol $^{\mathsf{T}}$. 
  
\subsection{Rank Metric} We consider a finite field extension $\fqm/\fq$ of degree $m \geq 1$ where $q$ is a power of a prime number.  
The \emph{support}   
 of   $\xv$ from $\fqm^L$ denoted by $\support{\fq}{\xv}$ is  the vector subspace  over $\fq$ spanned by its entries, namely $\support{\fq}{\xv} \eqdef \support{\fq}{x_1,\dots{},x_L} \subseteq \fqm$. 
 The   \emph{rank weight} of $\xv$ is  $\qnorm{\xv} \eqdef \dim \support{\fq}{\xv}$.
 Likewise, the  support $\support{\fq}{\Xm}$ of a  matrix $\Xm = (x_{i,j})$ 
 is  the vector subspace over $\fq$ spanned by all its  entries, 
 and its weight is $\qnorm{\Xm} \eqdef \dim \support{\fq}{\Xm}$. 
 For $a \in \fqm$ and for any integer $i$, $a^{[i]} \eqdef a^{q^i}$.
 Lastly, we let  $\grassman{w}{q}{m}$ be the set of all $w-$dimensional linear subspaces 
over $\fq$ included in  $\fqm$.

\subsection{$\lambdav-$Gabidulin Codes}\label{section_lambda_Gabidulin_codes}
In \cite{LT19}, a new family of codes called $\lambdav-$Gabidulin codes was introduced by Lau and Tan. These codes can be seen as generalized Gabidulin codes \cite{G91} by similarity to the well known family of generalized Reed-Solomon codes \cite{SX11}. 
\begin{definition}[$\lambdav-$Gabidulin codes]
    Given two integers $k$ and $n$ such that $k \leq n$, let $\gv = (g_1,\ldots,g_n) \in  \fqm^n$ such that $\qnorm{\gv}=n$, and $\lambdav=(\lambda_1,\ldots,\lambda_n) \in \fqm^n$. The $\lambdav-$Gabidulin code $\gab_{\lambdav}(\gv,k)$ of dimension $k$ associated to $\gv$ and $\lambdav$ is the code defined by the generator matrix $\Gm_{\lambdav}  =\Gm \Deltam $ where
    $\Deltam = (\Delta_{i,j} )$ is the diagonal matrix such that $\Delta_{i,i}=\lambda_i$ and $ \Gm$ is defined as follows
\begin{equation} \label{def:G_Gab}
    \Gm = \begin{pmatrix}
    g_1            &  &\cdots&  & g_n          \\
    g_1^{[1]}      &  &\cdots&  & g_n^{[1]}   \\
    \vdots         &  &      &  & \vdots   \\
    g_1^{[k - 1]}  &  &\cdots&  & g_n^{[k - 1]} 
\end{pmatrix}
\end{equation}
In the specific case where $\lambdav = (1, \dots, 1)$ then  $\gab_{\lambdav}(\gv,k) = \gab(\gv,k)$  is the Gabidulin code of dimension $k$ and support $\gv$.
\end{definition}
\begin{remark}
A matrix of the form of \eqref{def:G_Gab} is called  a Moore matrix generated by $\gv$. 
\end{remark}
As in the case of Generalized Reed-Solomon codes, the decoding algorithm of $\gab_{\lambdav}(\gv,k)$ is based on the decoding algorithm of $\gab(\gv,k)$ with the difference that $\lambdav-$Gabidulin codes come to a loss in terms of error correction capacity. Indeed, given a corrupted codeword $\yv = \cv + \ev$ with $\cv$ belonging to  $\gab_{\lambdav}(\gv,k)$ and $\ev$ being an error vector, there exists $\mv \in \fqm^k$ such that $\yv = \mv \Gm \Deltam + \ev$. So, $\yv \Deltam^{-1} = \mv \Gm  + \ev \Deltam^{-1}$ and, applying a decoding algorithm of $\gab(\gv,k)$ to $\yv \Deltam^{-1}$ allows the recovering of $\mv$ whenever $\qnorm{\ev \Deltam^{-1}}$ is less than $t \eqdef \frac{n-k}{2}$. Since $\qnorm{\ev \Deltam^{-1}} \leq \qnorm{\ev} \qnorm{\Deltam^{-1}}$, the previous condition is satisfied whenever $\qnorm{\ev} \leq \dfrac{t}{\qnorm{\Deltam^{-1}}}$.

Like Gabidulin codes, we also have the following proposition from \cite{LT19}, stating that the dual of a $\lambdav-$Gabidulin code is also a $\lambdav-$Gabidulin code.
\begin{proposition}
Given a $\lambdav-$Gabidulin code $\gab_{\lambdav}(\gv,k)$, there exists\\ $\hv = (h_1,\ldots, h_n) \in \fqm^n$ such that $\qnorm{\hv} = n$ and $\gab_{\lambdav}(\gv,k)^\perp = \gab_{\lambdav^{-1}}(\hv,n-k)$ with $\lambdav^{-1} \eqdef (\lambda_1^{-1},\ldots,\lambda_n^{-1})$.   
\end{proposition}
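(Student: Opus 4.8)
The plan is to reduce the statement to the well-known self-duality-type property of Gabidulin codes. Recall that for the ordinary Gabidulin code $\gab(\gv,k)$ there exists a vector $\hv \in \fqm^n$ with $\qnorm{\hv}=n$ such that $\gab(\gv,k)^\perp = \gab(\hv,n-k)$; this is the classical result of Gabidulin, proved by observing that a parity-check matrix of a Moore matrix is again (up to row operations) a Moore matrix, or equivalently via the trace-duality formula for the $q$-polynomial evaluation map. I would take this as known input. The only new ingredient here is bookkeeping with the diagonal scaling matrix $\Deltam$.

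First I would write $\Gm_{\lambdav} = \Gm\Deltam$ with $\Gm$ the $k\times n$ Moore matrix generated by $\gv$, and let $\Hm$ be the $(n-k)\times n$ Moore matrix generated by the dual support $\hv$, so that $\Gm\Hm^{\mathsf{T}}=\ZZ$ by the classical result. The aim is to exhibit a parity-check matrix of $\gab_{\lambdav}(\gv,k)$ of the form $\Hm'\Deltam^{-1}$ where $\Hm'$ is a Moore matrix. The natural candidate is $\Hm' = \Hm$ itself, composed with the inverse diagonal: consider the matrix $\Hm\,\Deltam^{-1}$... but one must be careful, since $(\Gm\Deltam)(\Hm\Deltam^{-1})^{\mathsf{T}} = \Gm\Deltam\Deltam^{-1}\Hm^{\mathsf{T}} = \Gm\Hm^{\mathsf{T}} = \ZZ$ only because $\Deltam$ is diagonal and hence $\Deltam^{\mathsf{T}}=\Deltam$, so $\Deltam(\Deltam^{-1})^{\mathsf{T}} = \Im_n$. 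This is exactly the computation that makes everything work, and it is the same phenomenon as for generalized Reed--Solomon codes.

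Concretely, the key steps are: (i) recall $\gab(\gv,k)^\perp = \gab(\hv,n-k)$ for suitable $\hv$ with $\qnorm{\hv}=n$; (ii) observe $\Gm_{\lambdav}(\Hm\Deltam^{-1})^{\mathsf{T}} = \Gm\Deltam(\Deltam^{-1})\Hm^{\mathsf{T}} = \Gm\Hm^{\mathsf{T}} = \ZZ$, using that $\Deltam$ is symmetric and invertible (invertibility needs all $\lambda_i\neq 0$, which is implicit in the definition, since otherwise $\Deltam$ is not a valid mask); (iii) check the dimension count: $\Hm\Deltam^{-1}$ has $n-k$ rows and full rank $n-k$ because $\Deltam^{-1}$ is invertible and $\Hm$ has rank $n-k$, and $k+(n-k)=n$, so $\Hm\Deltam^{-1}$ is indeed a generator matrix of $\gab_{\lambdav}(\gv,k)^\perp$; (iv) identify $\Hm\Deltam^{-1}$ as the generator matrix of $\gab_{\lambdav^{-1}}(\hv,n-k)$, since $\Deltam^{-1}$ is precisely the diagonal matrix with entries $\lambda_i^{-1}$; (v) conclude, noting $\qnorm{\hv}=n$ is inherited from the classical statement.

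I do not expect a serious obstacle: the argument is essentially the GRS duality proof transported to the rank-metric/$q$-polynomial setting, and the classical Gabidulin duality does all the real work. The one place to be slightly careful is justifying that $\qnorm{\hv}=n$, i.e.\ that the dual support produced by the classical construction is still a full-rank vector; this is part of the cited result from \cite{G91}, so I would simply invoke it. A secondary point worth a sentence is that $\Deltam$ must be invertible for the masked code to be well-defined, which one should state explicitly (equivalently, assume $\lambda_i \neq 0$ for all $i$, as is standard for this construction).
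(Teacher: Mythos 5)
Your argument is correct: reducing to the classical duality $\gab(\gv,k)^\perp=\gab(\hv,n-k)$ and checking $\Gm\Deltam\,(\Hm\Deltam^{-1})^{\mathsf{T}}=\Gm\Hm^{\mathsf{T}}=\ZZ$ together with the dimension count is exactly the standard GRS-style proof, and your side remarks (invertibility of $\Deltam$, i.e.\ $\lambda_i\neq 0$, and $\qnorm{\hv}=n$ coming from the classical statement) are the right points to flag. The paper itself gives no proof of this proposition --- it simply imports it from the Lau--Tan paper --- so there is nothing in the paper to diverge from; your write-up supplies the expected argument and would serve as a complete proof.
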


\subsection{LG Encryption Scheme} \label{section_lambda_gabidulin_scheme}
    The LG cryptosystem is a McEliece-like cryptosystem using $\lambdav-$Gabidulin codes. The system can be described as follows.
    \begin{itemize}
     \item \textbf{Parameters.}   Choose integers $m,n,k$ such that $m \geq n>k$, and $k$ does not divide $n-1$, and define $t \eqdef \left\lfloor {\dfrac{{n - k}}{2}} \right\rfloor $ and $a \eqdef \left\lfloor {\dfrac{{t}}{3}} \right\rfloor$

    \item \textbf{Key Generation.} 
    \begin{itemize}
     \item Generate uniformly at random $\gamma \in \fqm \backslash \fq$ such that $\gamma^2\neq 1$, and set $\lambdav \eqdef (\lambda_1, \ldots, \lambda_n)$ with $\lambda_i \raffect \{ \gamma,\gamma^{-1} \}$ for each $i$ in $\rg{1}{n}$
    \item Choose $\gv \in \fqm^n$ so that $\qnorm{\gv} = n$ and construct the generator matrix $\Gm_{\lambdav}$ of $\gab_{\lambdav}(\gv,k)$
     \item Pick uniformly at random  $\Sm \raffect \GL_k(\fqm)$ and an $n\times n$ invertible matrix $\Pm$ with entries from $\{ b\gamma,c\gamma^{-1} \mid b,c \in \fq \}$.
    The public key is then  $(\Gm_{\pub} \eqdef \Sm \Gm_{\lambdav} \Pm^{-1}, a)$, and the private key is $(\Sm, \gv, \lambdav, \Pm)$.
    \end{itemize}
    
    \item \textbf{Encryption.} Given a plaintext $\mv \in \fqm^k$, pick at random a vector $\ev \in \fqm^n$ such that $\qnorm{\ev}=a$, and compute the ciphertext $\yv \eqdef \mv \Gm_{\pub} + \ev $.
    
	\item \textbf{Decryption.} Let $\Deltam$ be the diagonal matrix with entries $\Delta_{ii}=\lambda_i$ for $i$ in $\rg{1}{n}$. Compute $\yv' = \yv \Pm \Delta^{-1}$ and apply to $\yv'$ the decoding algorithm of  $\gab(\gv,k)$  in order to recover $ \mv \Sm$. The plaintext $\mv$ is then recovered by multiplying  by $\Sm^{-1}$.
	\end{itemize}
Based on the above description, we have the following proposition.
\begin{proposition}\label{Prop:LG_to_Loidreau}
   The public matrix $\Gm_{\pub}$ of the LG cryptosystem can be rewritten as   
    \[
    \Gm_{\pub}=\Sm \Gm \Qm^{-1}
    \] 
    where $\Gm$ is the generator matrix of a Gabidulin code and the entries of the $n \times n$ matrix $\Qm$ belonging to $\set{V}=\left\langle {1 ,\gamma ^{ - 2} ,\gamma ^2 } \right\rangle _{\F_q }$.
\end{proposition}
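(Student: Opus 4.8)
The plan is to exploit the factorization $\Gm_{\lambdav}=\Gm\Deltam$ that is already built into the definition of $\lambdav$-Gabidulin codes, and simply to absorb the diagonal matrix $\Deltam$ into the masking matrix. Expanding the public key gives
\[
\Gm_{\pub}=\Sm\Gm_{\lambdav}\Pm^{-1}=\Sm\,\Gm\,\Deltam\,\Pm^{-1},
\]
where $\Gm$ is the Moore matrix \eqref{def:G_Gab} generating the Gabidulin code $\gab(\gv,k)$. I would then set $\Qm\eqdef\Pm\Deltam^{-1}$, so that $\Deltam\Pm^{-1}=\Qm^{-1}$ and hence $\Gm_{\pub}=\Sm\Gm\Qm^{-1}$. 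Invertibility of $\Qm$ is immediate since both $\Pm$ and $\Deltam$ are invertible.

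The only real content is to check that every entry of $\Qm$ lies in $\set{V}=\support{\fq}{1,\gamma^{-2},\gamma^2}$. As $\Deltam^{-1}$ is diagonal with diagonal entries $\lambda_j^{-1}$, we have $Q_{ij}=P_{ij}\lambda_j^{-1}$. By construction $\lambda_j\in\{\gamma,\gamma^{-1}\}$, hence $\lambda_j^{-1}\in\{\gamma^{-1},\gamma\}$, while each entry $P_{ij}$ of $\Pm$ lies in $\{b\gamma,c\gamma^{-1}\mid b,c\in\fq\}=\gamma\fq\cup\gamma^{-1}\fq$. A short case analysis over the two possibilities for $P_{ij}$ and the two for $\lambda_j^{-1}$ then settles it: the two ``aligned'' products give $Q_{ij}\in\gamma^2\fq$ or $Q_{ij}\in\gamma^{-2}\fq$, whereas the two ``mixed'' products $\gamma\cdot\gamma^{-1}$ and $\gamma^{-1}\cdot\gamma$ collapse to a scalar in $\fq$. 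In all cases $Q_{ij}\in\gamma^2\fq\cup\fq\cup\gamma^{-2}\fq\subseteq\set{V}$, so $\Qm\in\MS{n}{n}{\set{V}}$.

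I do not anticipate a genuine obstacle: the statement is essentially a bookkeeping reformulation, and its significance is conceptual rather than technical. It recasts the LG public key in exactly the shape of a Loidreau public key $\Sm\Gm\Qm^{-1}$ with $\qnorm{\Qm}\le\dim\set{V}\le 3$, which is precisely what makes the distinguisher-and-filtration technique of \cite{BL23} applicable to the LG scheme in the sections that follow. The one point worth stating with some care is that it is the support bound $\dim\set{V}\le 3$ (and the hypotheses $\gamma\notin\fq$, $\gamma^2\neq 1$ on $\gamma$), rather than mere membership of the $Q_{ij}$ in some space, that the attack exploits.
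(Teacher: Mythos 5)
Your proposal is correct and follows exactly the paper's own argument: expand $\Gm_{\pub}=\Sm\Gm\Deltam\Pm^{-1}$, set $\Qm=\Pm\Deltam^{-1}$, and observe that each entry $P_{ij}\lambda_j^{-1}$ lands in $\fq\cup\gamma^2\fq\cup\gamma^{-2}\fq\subseteq\set{V}$. The only difference is that you spell out the short case analysis that the paper leaves implicit, which is fine.
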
		

\begin{proof}
    We have 
     $\Gm_{\pub}=\Sm \Gm_{\lambdav} \Pm^{-1} = \Sm \Gm \Deltam \Pm^{-1}$ where $\Sm \in \GL_k(\fqm)$, $\Gm$ is a generator matrix of $\gab(\gv,k)$ and the entries of $\Pm$ belong to $\{b\gamma,c\gamma^{-1} \mid b,c\in \fq \}$, while $\Deltam^{-1}$ is a diagonal matrix with entries $\Delta_{ii}=\lambda_i \in \{ \gamma, \gamma^{-1} \}$ for $i$ in $\rg{1}{n}$. So, the matrix $\Qm = \Pm \Deltam^{-1}$ has its entries in $\left\langle {1 ,\gamma^{-2} ,\gamma ^2 } \right\rangle _{\fq }=\set{V}$.
\qed\end{proof}

\begin{remark}\label{rq:LG_to_Loidreau}
The previous proposition shows that the LG cryptosystem hides the structure of a Gabidulin code by means of an invertible matrix whose inverse lies in $\GL_n(\set{V})$ where $\set{V}$ is a  $\fq-$vector subspace of $\fqm$ of dimension at most $3$. Hence, the LG cryptosystem is a particular instance of the Loidreau's cryptosystem \cite{L17}.
\end{remark}

\begin{remark} \label{rem:basis_of_Q}
For any non-zero $\alpha$ from $\fqm$, we also have $\Gm_{\pub}= \left(\alpha\Sm \right) \Gm \left(\alpha \Qm \right)^{-1}$ which entails that the entries of the matrix $\Qm$ from Proposition \ref{Prop:LG_to_Loidreau} can be considered to belong to 
$\left\langle \alpha ,\alpha \gamma^{-2} , \alpha \gamma^{2} \right \rangle _{\fq }$. 
\end{remark}

\section{A Novel Cryptanalysis of the LG Cryptosystem} \label{sec:cryptanalysis}

In this section, we propose a new attack that breaks most of the proposed parameters. But before describing the attack, we collect in the following some preliminary results. 



\begin{lemma}\label{FreeChoiceOfsupport}
    Let $n, m$ be positive integers such that $n\leq m$ and $\gv=(g_1,\dots,g_n) \in \F_{q^m}^n$ such that $\qnorm{\gv}=n$. 
    \begin{enumerate}
        \item For any $\alpha \in \fqm^*$, $\gab(\alpha \gv,k) = \gab(\gv,k)$.
        \item Given $\gv'=(g'_1,\dots,g'_m) \in \F_{q^m}^m$ such that $\qnorm{\gv'} =m$, there exist at least $q^m-1$ full rank matrices $\Tm \in \F_{q}^{m\times n}$ such that $\gab( \gv' \Tm , k) = \gab(\gv,k)$. 
    \end{enumerate}
\end{lemma}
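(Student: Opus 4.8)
The first statement is a standard fact about Gabidulin codes: scaling the support vector $\gv$ by a nonzero scalar $\alpha \in \fqm^*$ scales every row of the Moore matrix $\Gm$ in a way that can be absorbed by an invertible change of basis on the message side. Concretely, the $j$-th column of the Moore matrix of $\alpha\gv$ has entries $(\alpha g_j)^{[i]} = \alpha^{[i]} g_j^{[i]}$ for $i = 0, \dots, k-1$. Thus the Moore matrix of $\alpha\gv$ equals $\Dm \Gm$ where $\Dm = \mathrm{diag}(\alpha^{[0]}, \alpha^{[1]}, \dots, \alpha^{[k-1]})$ is an invertible $k \times k$ matrix over $\fqm$ (invertible since $\alpha \neq 0$). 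Multiplying a generator matrix on the left by an invertible matrix does not change the code, so $\gab(\alpha\gv, k) = \gab(\gv, k)$. I would write this out in a single short displayed computation.

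For the second statement, the plan is to realize $\gab(\gv, k)$ as a \emph{restriction/evaluation} coming from the larger support $\gv'$. Since $\qnorm{\gv'} = m$, the entries $g'_1, \dots, g'_m$ form an $\fq$-basis of $\fqm$; likewise $g_1, \dots, g_n$ are $\fq$-linearly independent in $\fqm$. Hence there is a full-rank matrix $\Tm \in \F_q^{m \times n}$ with $\gv' \Tm = \gv$ — indeed, express each $g_j$ in the basis $(g'_1,\dots,g'_m)$ and take the coordinate column vectors as the columns of $\Tm$; full column rank follows from $\fq$-independence of the $g_j$. The key point is then that the Moore operation commutes with $\fq$-linear combinations of the support: for any $\fq$-linear map given by $\Tm$, the Moore matrix of $\gv'\Tm$ (of any fixed depth $k$) equals (Moore matrix of $\gv'$)$\,\cdot\,\Tm$, because applying $[\,i\,] = (\cdot)^{q^i}$ is $\fq$-linear and so commutes with right multiplication by a matrix over $\fq$. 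Therefore the generator matrix of $\gab(\gv'\Tm, k)$ is $\Gm' \Tm$ where $\Gm'$ is the $k \times m$ Moore matrix of $\gv'$; but by construction $\gv'\Tm = \gv$, so this code is exactly $\gab(\gv,k)$.

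To count the matrices $\Tm$, I would observe that \emph{any} $\Tm \in \F_q^{m\times n}$ with $\gv'\Tm = \alpha\gv$ for some $\alpha \in \fqm^*$ also works, by combining the above identity with part (1): $\gab(\gv'\Tm,k) = \gab(\alpha\gv,k) = \gab(\gv,k)$. For each fixed $\alpha \in \fqm^*$, the linear system $\gv'\Tm = \alpha\gv$ has a unique solution $\Tm_\alpha \in \F_q^{m\times n}$ (the coordinate matrix of $(\alpha g_1,\dots,\alpha g_n)$ in the basis $\gv'$), and it has full column rank since $\alpha g_1,\dots,\alpha g_n$ remain $\fq$-independent. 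Distinct $\alpha$ give distinct $\Tm_\alpha$ (read off $\alpha$ as $\gv'\Tm_\alpha$ composed with any fixed coordinate, or note $\gv' \Tm_\alpha = \alpha \gv$ determines $\alpha$ since $\gv \neq 0$). This yields $q^m - 1$ full-rank matrices with the desired property, proving the claim.

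The only mild subtlety — and the step I would be most careful about — is the commutation identity ``Moore matrix of $\gv'\Tm$ equals (Moore matrix of $\gv'$) times $\Tm$'': it relies crucially on $\Tm$ having entries in the base field $\fq$, so that the Frobenius powers $(\cdot)^{q^i}$ pass through the linear combinations unchanged. Everything else is routine linear algebra over $\fqm$ and $\fq$.
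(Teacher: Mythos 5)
Your proposal is correct and follows essentially the same route as the paper: part (1) via the diagonal matrix $\mathrm{diag}(\alpha,\alpha^{[1]},\dots,\alpha^{[k-1]})$ acting on the Moore matrix, and part (2) by assigning to each $\alpha\in\fqm^*$ the unique full-rank $\Tm_\alpha\in\F_q^{m\times n}$ with $\gv'\Tm_\alpha=\alpha\gv$ and noting that distinct $\alpha$ give distinct $\Tm_\alpha$. Your extra ``Moore matrix commutes with matrices over $\fq$'' observation is true but not needed here, since $\gab(\gv'\Tm_\alpha,k)$ depends only on the vector $\gv'\Tm_\alpha=\alpha\gv$, after which part (1) concludes exactly as in the paper.
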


\begin{proof}
For the first point, we can remark that a generator matrix of $\gab(\alpha \gv,k)$ is given by the following expression
\[ 
    \Gm_\alpha = 
    \begin{pmatrix}
    \alpha   &     0      &\cdots & 0          \\
      0      & \alpha^{[1]} &\cdots & 0   \\
     \vdots  &            &  \ddots  & \vdots   \\
       0     & 0 &\cdots & \alpha^{[k-1]} 
\end{pmatrix}
    \begin{pmatrix}
    g_1            &  &\cdots&  & g_n          \\
    g_1^{[1]}      &  &\cdots&  & g_n^{[1]}   \\
    \vdots         &  &      &  & \vdots   \\
    g_1^{[k - 1]}  &  &\cdots&  & g_n^{[k - 1]} 
\end{pmatrix}
\]
This clearly shows that $\gab(\alpha \gv,k) = \gab(\gv,k)$.
 Now we treat the second point, and let us consider an arbitrary $\gv'$ in $\F_{q^m}^m$ such that $\qnorm{\gv'} =m$. Notice that the entries of $\gv'$ form a basis of $\fqm$, and consequently
for every $\alpha$ in $\fqm^*$, as $\qnorm{\gv'} = m \geq n = \qnorm{\alpha \gv} = \qnorm{\gv}$, there exists a full rank matrix $\Tm_\alpha \in \fq^{m \times n}$ such that $\gv' \Tm_\alpha = \alpha \gv$. To finish, given $\alpha$ and $\alpha^\prime$ in $\fqm^*$, one can easily establish that having $\Tm_\alpha = \Tm_{\alpha^\prime}$ results in $\alpha = \alpha^\prime$. 
%
\qed
\end{proof}

\begin{theorem}\label{Theo:number_of_pot_solution}
    Let $\CC_{\pub}$ be the public code of the LG cryptosystem and $\hv_0 \in \fqm^m$ an arbitrary vector of weight $\qnorm{\hv_0} = m$. Then, there exist at least $q^m-1$ full rank matrices $\Mm \in \set{V}^{m \times n}$ such that $\Hm_0 \Mm$ is a parity-check matrix of $\CC_{\pub}$ where $\set{V} = \support{\fq}{1,\gamma^2, \gamma^{-2}}$ and $\Hm_0$ is  $(n-k) \times m$ parity-check matrix of  $\gab(\hv_0,k)$.
\end{theorem}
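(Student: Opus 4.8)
The plan is to reduce the statement to the behaviour of ordinary Gabidulin codes under the operations already analysed in Lemma~\ref{FreeChoiceOfsupport}. First I would invoke Proposition~\ref{Prop:LG_to_Loidreau} to discard the harmless left factor $\Sm$ and write $\CC_{\pub}=\gab(\gv,k)\,\Qm^{-1}$ with $\Qm\in\GL_n(\set{V})$. Dualising this, one has $\CC_{\pub}^{\perp}=\gab(\gv,k)^{\perp}\,\Qm^{\mathsf{T}}$: for $\cv\in\gab(\gv,k)$ and $\dv\in\gab(\gv,k)^{\perp}$ one computes $(\cv\Qm^{-1})(\dv\Qm^{\mathsf{T}})^{\mathsf{T}}=\cv\dv^{\mathsf{T}}=0$, and a dimension count turns this inclusion into an equality. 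Next, by the self-duality of the Gabidulin family (the $\lambdav=(1,\dots,1)$ instance of the proposition recalled in Section~\ref{section_lambda_Gabidulin_codes}), $\gab(\gv,k)^{\perp}=\gab(\hv,n-k)$ for some $\hv\in\fqm^{n}$ with $\qnorm{\hv}=n$. So it is enough to exhibit, for at least $q^m-1$ pairwise distinct full-rank matrices $\Mm\in\set{V}^{m\times n}$, a matrix $\Hm_0\Mm$ whose rows span $\gab(\hv,n-k)\,\Qm^{\mathsf{T}}$.

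The second step re-expresses the support $\hv$ through the prescribed support $\hv_0$. The elementary point is that the Frobenius fixes $\fq$, so for every $\Tm\in\fq^{m\times n}$ the $(n-k)\times n$ Moore matrix generated by $\hv_0\Tm$ is exactly $\Hm_0\Tm$, since $(\hv_0\Tm)_j^{[i]}=\sum_{\ell}h_{0,\ell}^{[i]}T_{\ell,j}$. Now I would apply Lemma~\ref{FreeChoiceOfsupport} with $n-k$ in place of $k$ — its proof is dimension-free, using only point~(1) together with the existence, for each $\alpha\in\fqm^{*}$, of a full-rank $\Tm_\alpha\in\fq^{m\times n}$ with $\hv_0\Tm_\alpha=\alpha\hv$ — to obtain at least $q^m-1$ full-rank matrices $\Tm\in\fq^{m\times n}$ such that $\gab(\hv_0\Tm,n-k)=\gab(\hv,n-k)$. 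For each of them $\Hm_0\Tm$ is a generator matrix of $\gab(\hv_0\Tm,n-k)=\gab(\gv,k)^{\perp}$, hence a parity-check matrix of $\gab(\gv,k)$.

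To conclude I would set $\Mm\eqdef\Tm\,\Qm^{\mathsf{T}}$ and verify the three required properties. The entries of $\Mm$ lie in $\set{V}$ because $\set{V}$ is an $\fq$-vector space while $\Tm$ has entries in $\fq$ and $\Qm$ has entries in $\set{V}$, so $M_{i,j}=\sum_{\ell}T_{i,\ell}Q_{j,\ell}\in\set{V}$; the matrix $\Mm$ has full rank $n$ because $\Tm$ has rank $n$ and $\Qm^{\mathsf{T}}$ is invertible; and $\Hm_0\Mm=(\Hm_0\Tm)\,\Qm^{\mathsf{T}}$ is a parity-check matrix of $\CC_{\pub}$ since the rows of $\Hm_0\Tm$ span $\gab(\gv,k)^{\perp}$ and $\CC_{\pub}^{\perp}=\gab(\gv,k)^{\perp}\Qm^{\mathsf{T}}$. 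Finally $\Tm\mapsto\Tm\Qm^{\mathsf{T}}$ is injective, so the at least $q^m-1$ matrices $\Tm$ produce at least $q^m-1$ distinct matrices $\Mm$, which is the claim.

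The argument hides no serious difficulty, but it is a chain of duality and transpose bookkeeping, and that is where care is needed. The two points I would double-check are: (i) that Lemma~\ref{FreeChoiceOfsupport}, though stated for dimension $k$, is legitimately used here at dimension $n-k$ — it is, because its proof never uses $k$; and (ii) that the passage from a parity-check matrix of $\gab(\gv,k)$ to one of $\CC_{\pub}=\gab(\gv,k)\Qm^{-1}$ is realised by right multiplication by $\Qm^{\mathsf{T}}$ (and not by $\Qm$ or $\Qm^{-1}$), together with the identity that the Moore matrix generated by $\hv_0\Tm$ equals $\Hm_0\Tm$. A sign or transpose slip in either place would be easy to miss, so both deserve an explicit check.
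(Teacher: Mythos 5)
Your proposal is correct and follows essentially the same route as the paper's own (much terser) proof: apply Lemma~\ref{FreeChoiceOfsupport} to the dual Gabidulin code to get at least $q^m-1$ matrices $\Tm\in\F_q^{m\times n}$ with $\Hm_0\Tm$ a parity-check matrix of the hidden code, then use Proposition~\ref{Prop:LG_to_Loidreau} to pass to $\CC_{\pub}$ via $\Mm=\Tm\Qm^{\mathsf{T}}$, whose entries stay in $\set{V}$ since $\set{V}$ is an $\fq$-subspace. Your write-up merely makes explicit the duality, Moore-matrix and injectivity bookkeeping that the paper leaves implicit, and those verifications are all accurate.
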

\begin{proof}
From Lemma~\ref{FreeChoiceOfsupport}, there exist at least $q^m-1$ full-rank matrices $\Tm \in \F_{q}^{m\times n} $ such that $\Hm_0 \Tm$ is a parity check matrix of the hidden Gabidulin code of the LG cryptosystem (see Proposition~\ref{Prop:LG_to_Loidreau}).  
Therefore,  $\Hm_0 \Tm \Qm^{\mathsf{T}}$ is a parity check matrix of $\CC_{\pub}$ where from Proposition~\ref{Prop:LG_to_Loidreau} we know that $\Qm^{\mathsf{T}}$ belongs to $\GL_n(\VC)$ which entails that the entries of $\Mm = \Tm \Qm^{\mathsf{T}}$ are also elements of $\set{V}$. \qed
\end{proof}

The starting point of the attack is  to choose an arbitrary vector $\hv_0 =(h_{0,1},\dots,h_{0,m}) \in \fqm^m$ of rank weight $\qnorm{\hv_0} = m$ and set up the following linear system 
\begin{equation}\label{Key_Eq_1}
    \Gm_{\pub} \Mm^{\mathsf{T}} \Hm_0^{\mathsf{T}} = \ZZ
\end{equation}
 The linear system has $k(n-k)$ equations and $mn$ unknowns from $\fqm$ which are the entries of the matrix $\Mm$. In order to solve such a system, we adopt the combinatorial approach given in \cite{BL23} that exploits the fact that the entries of $\Mm$ lie  in a small $\fq-$subspace $\FC \subsetneq \fqm$ of dimension $r \ge 3$. 

\begin{proposition}\label{attack_basis}
 Given $\FC \in \grassman{r}{q}{m}$ with $3 \leq r \leq k-\lceil k^2/n \rceil$, if there exists $\alpha \in \fqm^*$ such that $\alpha \cdot \VC \subseteq \FC$, then an alternative private key of the LG cryptosystem can be recovered by solving a linear system with $mk(n-k)$ equations and $mnr$ unknowns over $\F_{q}$.
\end{proposition}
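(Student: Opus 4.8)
The plan is to adapt the combinatorial technique of~\cite{BL23}: guess the small space $\FC$, linearize the key equation~\eqref{Key_Eq_1} over $\F_q$ by means of a basis of $\FC$, solve the resulting system, and turn a solution into a decryption procedure. First I would fix an ordered $\F_q$-basis $(f_1,\dots,f_r)$ of $\FC$ and write a candidate $\Mm\in\FC^{m\times n}$ uniquely as $\Mm=\sum_{l=1}^{r}f_l\,\Mm_l$ with $\Mm_l\in\F_q^{m\times n}$, so that the unknowns become the $mnr$ entries of the matrices $\Mm_l$. Substituting into~\eqref{Key_Eq_1} gives $\sum_{l}f_l\,\Gm_{\pub}\Mm_l^{\mathsf{T}}\Hm_0^{\mathsf{T}}=\ZZ$; each of its $k(n-k)$ scalar relations lies in $\fqm$, and expanding both the $f_l$ and the entries of $\Gm_{\pub}\Mm_l^{\mathsf{T}}\Hm_0^{\mathsf{T}}$ over a fixed $\F_q$-basis of $\fqm$ splits it into $m$ relations over $\F_q$ that are linear in the entries of the $\Mm_l$. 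This gives precisely the announced system, $mk(n-k)$ equations in $mnr$ unknowns over $\F_q$, which Gaussian elimination solves in time polynomial in $m,n,k$.

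Next I would verify that the system has the solutions we want. Using Remark~\ref{rem:basis_of_Q} and the scalar $\alpha$ supplied by the hypothesis, write $\Gm_{\pub}=\Sm'\,\Gm\,(\Qm')^{-1}$ with the entries of $\Qm'$ lying in $\alpha\cdot\VC\subseteq\FC$. Running the argument of Theorem~\ref{Theo:number_of_pot_solution} on this presentation yields at least $q^m-1$ full-rank matrices $\Mm=\Tm\,(\Qm')^{\mathsf{T}}$, $\Tm\in\F_q^{m\times n}$, all with entries in $\alpha\cdot\VC\subseteq\FC$, that satisfy~\eqref{Key_Eq_1} and hence solve the linear system. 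Moreover, as in the proof of Lemma~\ref{FreeChoiceOfsupport}, the map $\alpha'\mapsto\Tm_{\alpha'}$ is $\F_q$-linear and injective, so these ``genuine'' solutions span an $\F_q$-subspace of dimension $m$ inside the solution space.

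The hypothesis $r\leq k-\lceil k^2/n\rceil$ forces $nr\leq k(n-k)$, hence $mnr\leq mk(n-k)$: the linear system has at least as many equations as unknowns, so — arguing heuristically as in~\cite{BL23} — for generic $\Gm_{\pub}$ and $\hv_0$ its solution space should be no larger than the $m$-dimensional genuine family just found. Granting this, essentially any nonzero solution $\Mm$ has rank $n$ and support inside $\alpha\cdot\VC$, so $\qnorm{\Mm}\leq 3$; I would then present $(\Hm_0,\Mm)$ as an alternative private key. Indeed $\Hm_0\Mm$ has rank $n-k$ (its rows are orthogonal to $\CC_{\pub}$ by~\eqref{Key_Eq_1}, while $\rk\Hm_0=n-k$ and $\rk\Mm=n\geq n-k$), so it is a parity-check matrix of $\CC_{\pub}$; and for a ciphertext $\yv=\mv\Gm_{\pub}+\ev$ with $\qnorm{\ev}=a$ one has $\qnorm{\ev\Mm^{\mathsf{T}}}\leq\qnorm{\ev}\,\qnorm{\Mm}\leq 3a\leq t$ since $a=\lfloor t/3\rfloor$, so $\yv\Mm^{\mathsf{T}}\Hm_0^{\mathsf{T}}=\ev\Mm^{\mathsf{T}}\Hm_0^{\mathsf{T}}$ is the syndrome of a vector of rank weight at most $t$; the decoding algorithm of $\gab(\hv_0,k)$ (of which $\Hm_0$ is a parity-check matrix) then returns $\ev\Mm^{\mathsf{T}}$, after which $\ev=\ev\Mm^{\mathsf{T}}\Rm$ for any right inverse $\Rm$ of $\Mm^{\mathsf{T}}$ (which exists because $\rk\Mm=n$), and $\mv$ is read off from $\mv\Gm_{\pub}=\yv-\ev$. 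Thus the alternative key is recovered by a single linear solve.

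The step I expect to be the real obstacle is the genericity claim of the third paragraph: the equation/unknown count only shows that the system is over-determined, and ruling out parasitic solutions — so that an extracted solution genuinely has rank $n$ and rank weight at most $3$, rather than merely entries in the $r$-dimensional space $\FC$ — is a heuristic assumption in the spirit of~\cite{BL23} that one should not expect to prove unconditionally for every choice of parameters. By contrast, the linearization, the existence of solutions ensured by $\alpha\cdot\VC\subseteq\FC$ together with Theorem~\ref{Theo:number_of_pot_solution}, and the decryption argument based on a low-rank-weight $\Mm$ are all routine.
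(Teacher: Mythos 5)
Your proposal is correct and follows essentially the same route as the paper's proof: linearize \eqref{Key_Eq_1} over an $\F_q$-basis of $\FC$ to get $mk(n-k)$ equations in $mnr$ unknowns (the bound $r\leq k-\lceil k^2/n\rceil$ being exactly the equations-versus-unknowns condition), then use a nonzero solution of small rank weight together with $\Hm_0$ as an alternative key by decoding $\ev\Mm^{\mathsf{T}}$, whose weight is at most $3a\leq t$. Your explicit appeal to Theorem~\ref{Theo:number_of_pot_solution} for the existence of an $m$-dimensional family of genuine solutions, and your flagging of the genericity of the solution space as a heuristic, simply make explicit what the paper handles via Theorem~\ref{Theo:number_of_pot_solution} and its remark that simulations yield an $m$-dimensional solution space.
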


\begin{proof}
   
Let us consider a basis $\{f_1,\dots,f_r\}$ of $\FC$, and let us define for each entry $m_{i,j}$ of $\Mm^{\mathsf{T}}$,  
$r$ unknowns $x_{i,j,1},\dots,x_{i,j,r}$ that lie in $\fq$ such that we have $m_{i,j} = \sum_{\ell = 1}^r x_{i,j,\ell} f_\ell$. Next, by replacing $m_{i,j}$ with this expression in  \eqref{Key_Eq_1}, we obtain the following linear system,  
\begin{equation}\label{SystemOverFqm}
\sum_{i=1}^n g^{\pub}_{a,i} 
\sum_{j=1}^m
\left( \sum_{\ell = 1}^r x_{i,j,\ell} f_\ell \right)
h_{0,j}^{q^b}
=0
\end{equation}
for  $a$ in $\rg{1}{k}$, $b$ in $\rg{0}{n-k-1}$, and
 $g^{\pub}_{a,i}$ being the entry of $G_{\pub}$ located at $(a,i)$.
The linear system \eqref{SystemOverFqm} has $nrm$ unknowns and $k(n-k)$ equations, and can be unfolded over $\F_{q}$ resulting into a linear system with $mk(n-k)$ equations. 
We want more equations than unknowns, so we have the condition $nrm\leq mk(n-k)$, that is to say $r \leq k- \lceil k^2/n \rceil$. Given a non-zero solution,  one can then calculate $ \Mm^\prime = \alpha \Mm$ which together with $\Hm_0$ can serve as an alternative secret key. Indeed, given a ciphertext $\yv = \mv \Gm_{\pub} + \ev$ with $\mv \in \F_{q^m}^k$, and $\ev \in \F_{q^m}^n$ such that $\qnorm{\ev}= \lfloor \lfloor\frac{n-k}{2}\rfloor / 3 \rfloor$, we have $\Hm_0 \Mm^{\prime} \yv^{\mathsf{T}}=\Hm_0 \Mm^{\prime} \ev^{\mathsf{T}}=\sv^{\mathsf{T}}$. 
$\Mm^\prime \ev^{\mathsf{T}}$ can then be recovered in polynomial time using the parity check matrix $\Hm_0$. To finish, given $\Mm^\prime \ev^{\mathsf{T}}$ and $\Mm^\prime$, $\ev$ is easily recovered using linear algebra.\qed
\end{proof}
According to the previous Proposition \ref{attack_basis}, one can devise the attack that finds an alternative secret key of the LG cryptosystem in two main steps: 
\paragraph{\bf Step 1 -- Guessing $\FC$.} As stated in Proposition \ref{attack_basis}, we are looking for any subspace $\FC$ of $\fqm$ such that $\alpha \cdot \VC \subseteq \FC$ for some non-zero $\alpha \in \F_{q^m}$. Notice that we are exactly in the context of \cite[Section B]{AGHT18} and, from  \cite[Proposition III.1]{AGHT18}, with high probability there are exactly $S = \frac{q^m - 1}{q-1}$ subspaces of the form $\alpha \cdot V$, $\alpha \in \mathbb{F}_{q^m}^*$. From \cite{AGHT18} again, for a fixed $\alpha \cdot \VC$, the probability $P_\alpha $ that a given $\FC$ of dimension $r$ contains $\alpha \cdot \VC$ is $$P_\alpha = \frac{ \dbinom{r}{3}_q}{ \dbinom{m}{3}_q} \approx q^{-3(m-r)}.$$
So, the probability that $\FC$ of dimension $r$ contains a subspace of the form $\alpha \cdot \VC$ can be approximated by $$S P_\alpha \approx q^{m-3m+3r} \approx q^{-2m+3r}.$$ 
%
However, one can also exploit the fact that the vector subspace $\VC$ has a basis of the form $\{1, \beta^2, \beta^4 \}$ for some $\beta \in \fqm \setminus \fq$ (use Remark~\ref{rem:basis_of_Q} with $\alpha = \gamma^2$). Consequently, it is enough to look for $\FC$ among subspaces generated by a basis of the form $\{1, \beta, \beta^2, \cdots, \beta^{r-1} \}$ (assuming $r\ge 7$) for some $\beta \in \fqm$. So, instead of sampling $\FC$ at random in $\grassman{r}{q}{m}$, we simply pick  at random $\beta$ in $\fqm \setminus \fq $ and set $\FC = \FC_\beta  \eqdef \support{\fq}{ 1, \beta, \beta^2, \cdots, \beta^{r-1}}$. We can distinguish at least four cases, all resulting in a successful choice for $\FC_\beta$:
\[
\beta =
\begin{cases}
\gamma  &\text{ then }    \VC = \support{\fq}{1, \gamma^2, \gamma^4} \subset \FC_\beta \\
\gamma^2  &\text{ then }  \gamma^2 \cdot \VC = \support{\fq}{\gamma^2, \gamma^4, \gamma^6} \subset \FC_\beta \\
\gamma^{-1} &\text{ then }  \gamma^{-4} \cdot \VC = \support{\fq}{1, \gamma^{-2}, \gamma^{-4}} \subset \FC_\beta\\
\gamma^{-2} &\text{ then } \gamma^{-6} \cdot \VC = \support{\fq}{\gamma^{-2}, \gamma^{-4}, \gamma^{-6}} \subset \FC_\beta
\end{cases}
\]
All these permit to establish that the success probability is at least $4/(q^m-q)$.
Consequently, this shows that the success probability of Step~1 is at most  
$\max \left\{q^{-2m +3r},4/(q^m-q)\right\}$.

%
%
%

\paragraph{\bf Step 2 -- Solving a linear system.} The idea here is to solve by Linear Algebra  the linear system \eqref{SystemOverFqm} after unfolding it over $\fq$.
 If this results to a trivial solution then we go back to the first step until  a non-zero solution of  \eqref{SystemOverFqm} is obtained, so that we can reconstruct the matrix $\Mm$ in \eqref{Key_Eq_1}. In our simulations, it appears that for a good choice of $\FC$, we end up with a solution space of dimension $m$ and, thanks to Theorem \ref{Theo:number_of_pot_solution}, any non-zero element is valid for the reconstruction of $\Mm$.   

%
%
%
%
%
%
%
%
%

%


While the cost in the second step is known as the cost of solving a linear system, the cost of the first step depends on the guessing method used, as it will be the inverse of the corresponding success probability. A simple analysis  shows that we always have $q^{-2m +3r} < 4/(q^m-q)$ since $m \ge n > r$.  This leads to the following proposition.

\begin{proposition}
    
    Given a public matrix $\Gm_{\pub}$ of the LG cryptosystem, an alternative private key can be recovered with on average  $\mathcal{O}\left(m^3k^3(n-k)^3 q^m\right)$
    operations in $\fq$.

\end{proposition}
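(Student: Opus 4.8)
The plan is to assemble the running time from the two phases described above. The attack is a rejection-sampling loop (Step~1) around a single linear-algebra computation (Step~2) performed once per iteration, so the expected cost is the expected number of iterations of Step~1 multiplied by the cost of one execution of Step~2. Inside Step~1, drawing $\beta \in \fqm\setminus\fq$ and forming the basis $\{1,\beta,\dots,\beta^{r-1}\}$ of $\FC_\beta \eqdef \support{\fq}{1,\beta,\dots,\beta^{r-1}}$ costs only $\mathrm{poly}(m)$ operations, which is negligible against Step~2; hence only these two quantities matter.

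First I would pin down the cost of Step~2. The system \eqref{SystemOverFqm} has $k(n-k)$ equations over $\fqm$ (indexed by $a\in\rg{1}{k}$ and $b\in\rg{0}{n-k-1}$) in the $nmr$ scalar unknowns $x_{i,j,\ell}\in\fq$; unfolding each $\fqm$-equation coordinatewise over $\fq$ turns it into $m$ equations over $\fq$, giving a system of $mk(n-k)$ equations over $\fq$. The standing hypothesis $r\le k-\lceil k^2/n\rceil$ inherited from Proposition~\ref{attack_basis} is precisely $nr\le k(n-k)$, hence $mnr\le mk(n-k)$ and the $\fq$-system is overdetermined; solving it by Gaussian elimination costs $\mathcal O\big(mk(n-k)\cdot(mnr)^2\big)=\mathcal O\big((mk(n-k))^3\big)=\mathcal O\big(m^3k^3(n-k)^3\big)$ operations in $\fq$. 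Next I would bound the expected number of iterations of Step~1: by the four-case analysis above, a uniform $\beta\in\fqm\setminus\fq$ yields a subspace $\FC_\beta$ containing some $\alpha\cdot\VC$ whenever $\beta\in\{\gamma,\gamma^2,\gamma^{-1},\gamma^{-2}\}$, so each iteration succeeds with probability at least $4/(q^m-q)$; since the iterations are independent, the index of the first success is geometrically distributed and has expectation at most $(q^m-q)/4=\mathcal O(q^m)$ (the alternative random-Grassmannian strategy, with the far smaller success probability $q^{-2m+3r}$, never governs the complexity, because $r\le k-\lceil k^2/n\rceil\le n/4\le m/4$ forces $q^{-2m+3r}<4/(q^m-q)$). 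When a good $\FC_\beta$ is hit, Theorem~\ref{Theo:number_of_pot_solution} guarantees at least $q^m-1$ matrices $\Mm$ with entries in $\VC$ (equivalently, after rescaling, in $\FC_\beta$) satisfying \eqref{Key_Eq_1}, so the solution space of \eqref{SystemOverFqm} is non-trivial, Step~2 returns a non-zero $\Mm$, and Proposition~\ref{attack_basis} converts $(\Mm,\Hm_0)$ into an alternative private key in polynomial time.

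Multiplying the two bounds gives the announced expected cost $\mathcal O(q^m)\cdot\mathcal O\big(m^3k^3(n-k)^3\big)=\mathcal O\big(m^3k^3(n-k)^3\,q^m\big)$ operations in $\fq$. The one step that I do not expect to make fully rigorous — the main obstacle — is the claim that a \emph{bad} guess for $\FC$ almost always produces only the trivial solution of \eqref{SystemOverFqm}, so that ``non-zero solution'' is a reliable acceptance test for $\FC$ and the loop restarts exactly when it should. This is a genericity assumption on the coefficients of the overdetermined system, supported by the simulations reported above (a correct guess consistently gave a solution space of dimension exactly $m$) rather than by a proof; once it is granted, the remainder is just the mean of a geometric distribution combined with the cubic cost of Gaussian elimination.
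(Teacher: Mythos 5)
Your proposal is correct and follows essentially the same route as the paper: the paper's own proof is just the observation that one expects $\frac{q^m-q}{4}=\mathcal{O}(q^m)$ repetitions of Step~1, each followed by solving the unfolded system \eqref{SystemOverFqm} at cost $\mathcal{O}\left(m^3k^3(n-k)^3\right)$, which you reproduce while filling in the details (Gaussian-elimination cost, the comparison $q^{-2m+3r}<4/(q^m-q)$, the geometric-distribution argument) that the paper leaves implicit. The genericity caveat you flag --- that a bad guess for $\FC$ should yield only the trivial solution --- is likewise not proved in the paper, which only appeals to simulations, so your treatment is no less rigorous than the original.
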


\begin{proof}
 The time complexity is given by the fact that on average one has to repeat $ \frac{q^m-q}{4} = \mathcal{O}(q^m)$ times the solving of the linear system \eqref{SystemOverFqm} whose cost is $\mathcal{O}(m^3k^3(n-k)^3)$ operations.
 \qed
\end{proof}

Table~\ref{tab:comlexity} gives the time complexities of our attack on the parameters of the LG cryptosystem. While the attack is still exponential, it appears that several parameters of the LG cryptosystem are broken. These show that our attack should be considered in the future when setting up new parameters of the LG cryptosystem and similar cryptosystems.

 \begin{table}[!hbtp]
\begin{center}
    \begin{tabular}{lcrr}
      \toprule
      \textbf{Scheme} & $(q,m,n,k)$ & \textbf{Claimed security} & \textbf{Complexity} \\
      \midrule
      LG-I & (2, 83, 79, 31) & 128 & 132 \\
      \midrule
      LG-II & (2, 85, 83, 29) & 128 & 134 \\
      \midrule
      LG-III & (2, 97, 89, 23) & 128 & 146 \\ 
      \midrule
      \textbf{LG-IV} &  \textbf{(2, 117, 115, 49)} & \textbf{256} & \textbf{170}\\ 
      \midrule
      \textbf{LG-V} & \textbf{(2, 129, 127, 36)} & \textbf{256} & \textbf{183} \\ 
      \midrule
      \textbf{LG-VI} & \textbf{(2, 133, 131, 34)}  & \textbf{256} & \textbf{187} \\ 
      \midrule
      \textbf{LG-VII} & \textbf{(2, 85, 83, 35)}   & \textbf{140} & \textbf{134} \\ 
      \midrule
      LG-VIII & (2, 91, 89, 28)  & 140 & 140 \\ 
      \bottomrule \\
    \end{tabular}
 \caption{Time complexity of our attack on the parameters of the LG cryptosystem.}
    \label{tab:comlexity}
\end{center}
    \end{table}

 \section{Weak Keys} \label{sec:weak}
 We show in this section that if the integer $m$ is not prime, some choices of secret parameters can lead to a polynomial-time attack. We refer to such parameters as weak keys. Given a public matrix, we show how to detect such weak keys. Then, we describe the polynomial attack allowing to find an alternative secret key from such a public matrix.
Given a vector $\vv \in \fqm^n$ and a positive integer $i$, here we denote by $\vv^{[i]}$ the vector $\vv^{[i]} \eqdef (v_1^{[i]},\dots,v_n^{[i]})$. Similarly, we extend this notation to matrices. For a linear code $\CC$ with generator matrix $\Gm$, we denote by $\CC^{[i]}$ the linear code generated by $\Gm^{[i]}$. Assuming that $m$ is not prime, the extension field $\fqm$ admits proper subfields of the form $\fql$ with $\ell$ dividing $m$. As the parameter  $\gamma$ of the LG cryptosystem is chosen at random in $\fqm \setminus \fq$ there is some chance that $\gamma$ belongs to a proper subfield $\fql$ of $\fqm$. So, let us start by the following Lemma that helps us to know whether $\gamma$ lies in $\fql$ or not. 

\begin{lemma}\label{Gamma_distinguisher}
    Let $m$, $k$, $n$, $\ell$ be positive integers such that  $\ell < \min \{k, n-k\}$ and $\ell$ divides $m$. Given a public code $\CC_{\pub}$ of the LG cryptosystem generated from a secret parameter $\gamma \in \fqm$, we have the following properties:  
 \begin{enumerate}
    \item If $\gamma$ belongs to the subfield $\fql \subsetneq \fqm$ then $\dim \big(\CC_{\pub}+\CC_{\pub}^{[\ell]}\big)= k+ \ell$ 
    \item If $\gamma \notin \fql$ then $\dim \big(\CC_{\pub}+\CC_{\pub}^{[\ell]}\big)= \min \{ 2k, n \}$ with high probability.  
 \end{enumerate}   
\end{lemma}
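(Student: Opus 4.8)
The plan is to reduce both assertions to computing the rank of a stacked generator matrix: the first case will be an exact identity, the second a genericity statement. From the key generation (equivalently, from Proposition~\ref{Prop:LG_to_Loidreau} with $\Qm=\Pm\Deltam^{-1}$), the public code $\CC_{\pub}$ admits the generator matrix $\Gm\Deltam\Pm^{-1}$, where $\Gm$ is the Moore matrix generated by $\gv$ (a generator matrix of $\gab(\gv,k)$), $\Deltam$ is the diagonal matrix with diagonal $\lambda_1,\dots,\lambda_n\in\{\gamma,\gamma^{-1}\}$, and $\Pm$ has entries in $\{b\gamma,c\gamma^{-1}\mid b,c\in\fq\}$; the factor $\Sm$ is irrelevant at the level of codes. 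Applying the $q^\ell$-th power Frobenius entrywise, $\CC_{\pub}^{[\ell]}$ has generator matrix $\Gm^{[\ell]}\Deltam^{[\ell]}(\Pm^{[\ell]})^{-1}$, so $\dim\big(\CC_{\pub}+\CC_{\pub}^{[\ell]}\big)$ is the rank of the $2k\times n$ block matrix with $\Gm\Deltam\Pm^{-1}$ on top and $\Gm^{[\ell]}\Deltam^{[\ell]}(\Pm^{[\ell]})^{-1}$ below. Since $k\le\qnorm{\gv}$, one has $\dim\CC_{\pub}=\dim\CC_{\pub}^{[\ell]}=k$, hence this rank never exceeds $\min\{2k,n\}$.

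First suppose $\gamma\in\fql$. Then every entry of $\Deltam$ and of $\Pm$ lies in $\fql$ and is therefore fixed by $x\mapsto x^{[\ell]}$, so $\Deltam^{[\ell]}=\Deltam$ and $\Pm^{[\ell]}=\Pm$, and consequently
\[
\CC_{\pub}+\CC_{\pub}^{[\ell]}=\left(\text{row span of }\begin{pmatrix}\Gm\\\Gm^{[\ell]}\end{pmatrix}\right)\Deltam\Pm^{-1}.
\]
The rows of the stacked Moore matrix are $\gv^{[0]},\dots,\gv^{[k-1]}$ together with $\gv^{[\ell]},\dots,\gv^{[k-1+\ell]}$; because $\ell<k$ these two ranges of exponents overlap, so together they span $\gab(\gv,k+\ell)$, which has dimension exactly $k+\ell$ since $k+\ell<n=\qnorm{\gv}$ (here $\ell<n-k$ is used). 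Right-multiplication by the invertible matrix $\Deltam\Pm^{-1}$ preserves dimension, so $\dim\big(\CC_{\pub}+\CC_{\pub}^{[\ell]}\big)=k+\ell$.

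Now suppose $\gamma\notin\fql$. The upper bound $\min\{2k,n\}$ always holds, and the point is that it is attained for all but a negligible fraction of keys. The event that the rank drops below $\min\{2k,n\}$ is the simultaneous vanishing of all minors of that size in the block matrix above; these minors are rational expressions in the key material (polynomial once the denominators $\det\Pm$, $\det\Pm^{[\ell]}$ are cleared), so the strategy is to exhibit one admissible triple $(\gv,\lambdav,\Pm)$ with $\gamma\notin\fql$ for which some such minor is nonzero, and then to conclude by a Schwartz--Zippel argument---the locus where the rank is not maximal being cut out by the vanishing of those minors---that the fraction of keys for which the rank drops is $\mathcal{O}(q^{-1})$, using the randomness of $\lambdav$ (each $\lambda_i$ independent and uniform in $\{\gamma,\gamma^{-1}\}$) and of $\Pm$. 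A natural class to test is $\Pm=\gamma\Im_n$, which turns $\CC_{\pub}$ into the $\lambdav$-Gabidulin code $\gab_{\lambdav}(\gv,k)$ and $\CC_{\pub}^{[\ell]}$ into its Frobenius image, so that the claim becomes that $\begin{pmatrix}\Gm\Deltam\\\Gm^{[\ell]}\Deltam^{[\ell]}\end{pmatrix}$ has rank $\min\{2k,n\}$ for a sufficiently mixed choice of $\lambdav$.

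The main obstacle is precisely this second case. Unlike the first, it is intrinsically probabilistic: the admissible but degenerate choice $\Deltam=\Pm=\gamma\Im_n$ makes the two diagonal factors cancel and brings the dimension back down to $k+\ell$ even though $\gamma\notin\fql$, which shows that the hypothesis $\gamma\notin\fql$ by itself does not force the generic rank---it is the genuine mixing of $\gamma$ and $\gamma^{-1}$ along the diagonal of $\Deltam$ (together with the generic shape of $\Pm$) that does. Producing a clean non-degenerate witness is delicate because the key is not a free tuple of field elements---the Moore structure of $\Gm$ and the prescribed shapes of $\Deltam$ and $\Pm$ must be respected---and because $\lambdav$ ranges over a finite set rather than an affine space, which makes the quantitative probability bound slightly less automatic; a fully rigorous argument may therefore need to pair an explicit witness with a direct check on the remaining parameters or with computational verification.
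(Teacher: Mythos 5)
Your treatment of the first item is correct and is essentially the paper's argument: when $\gamma\in\fql$ the masking matrix is fixed by the Frobenius $x\mapsto x^{[\ell]}$ (the paper works with $\Qm=\Pm\Deltam^{-1}\in\GL_n(\fql)$ rather than with $\Deltam$ and $\Pm$ separately, which is the same computation), so the sum code is the image under an invertible map of the span of $\gv^{[0]},\dots,\gv^{[k+\ell-1]}$, of dimension $k+\ell$ because $\ell<k$ and $k+\ell<n$.

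The second item is where your proposal has a genuine gap, and you say so yourself. The paper does not attempt a Schwartz--Zippel argument over the key randomness; it argues that when $\gamma\notin\fql$ the matrix $\Qm^{-1}$, and hence $\Gm_{\pub}$, behaves like a uniformly random matrix over $\fqm$, and then invokes Overbeck's result \cite[Lemma 5.2]{O08} on random codes, which gives $\dim\big(\CC_{\pub}+\CC_{\pub}^{[\ell]}\big)=\min\{2k,n\}$ with probability at least $1-4q^{-m/\ell}$. Your sketched route is not completed: you never exhibit the non-vanishing minor (and you correctly note that the Moore structure of $\Gm$, the constrained shapes of $\Deltam$ and $\Pm$, and the fact that $\lambdav$ ranges over the finite set $\{\gamma,\gamma^{-1}\}^n$ all stand in the way of an automatic genericity argument). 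Moreover, even if carried out, a Schwartz--Zippel bound of the form $\mathcal{O}(q^{-1})$ is useless for the actual LG parameters, which all have $q=2$: a failure probability of order $1/2$ is not ``high probability,'' whereas the bound $1-4q^{-m/\ell}$ inherited from \cite{O08} is overwhelming since $m/\ell\geq 2$ and $m$ is large. So to close the gap you should either reduce to the random-code setting and cite (or reprove) the estimate of \cite[Lemma 5.2]{O08}, or prove a dedicated probabilistic statement over the randomness of $\lambdav$ and $\Pm$ with a bound that decays in $q^{m}$ rather than in $q$. Your observation that the admissible degenerate key $\Deltam=\Pm=\gamma\Im_n$ (i.e.\ $\Qm=\Im_n$) drops the dimension to $k+\ell$ even when $\gamma\notin\fql$ is a nice remark --- it shows the second item cannot be made deterministic and that the paper's ``with high probability'' (and its implicit modelling assumption on $\Qm$) is genuinely needed --- but it is not a substitute for the missing probabilistic estimate.
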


\begin{proof}
 From Proposition \ref{Prop:LG_to_Loidreau}, $\Gm_{\pub} = \Sm \Gm \Qm^{-1}$ with $\Sm$ in $\GL_k(\fqm)$, $\Gm \in 
\MS{k}{n}{\fqm}$ a Moore matrix and $\Qm \in \GL_n(\VC)$ where $\VC = \support{\fq}{1, \gamma^{-2}, \gamma^2} .$  If $\gamma$ belongs to $\fql$, then we have $\VC \subset \fql$ and $\Qm^{-1} \in \GL_n(\fql).$  As a consequence, we also have $(\Qm^{-1})^{[\ell]} = \Qm^{-1}.$ Thus, since $\ell \leq k-1$, $\dim \big(\CC_{\pub}+\CC_{\pub}^{[\ell]}\big)$ is given by 
\[
\rk \left( \begin{bmatrix}
    \Gm_{\pub} \\
    \Gm_{\pub}^{[\ell]}
\end{bmatrix} \right)
= \rk \left( \begin{bmatrix}
 \Sm & \ZZ \\
 \ZZ & ~\Sm^{[\ell]} \end{bmatrix} \begin{bmatrix}
    \Gm \\
    \Gm^{[\ell]}
\end{bmatrix} \Qm^{-1} \right) = \rk \left(  \begin{bmatrix}
    \Gm \\
    \Gm^{[\ell]}
\end{bmatrix}  \right) = k + \ell.
\]
Now, let us assume that  $\gamma$ is not in $\fql$. Then $\Qm^{-1}$ behaves like a random matrix in $\GL_n(\fqm)$ and it is also the case for $\Gm_{\pub}$. Thus, by \cite[Lemma 5.2]{O08}, we have $\dim \big(\CC_{\pub}+\CC_{\pub}^{[\ell]}\big)= \min \{ 2k, n \}$ with a probability greater than $1-4q^{-m/\ell}$.
\qed
\end{proof}

From the previous Lemma~\ref{Gamma_distinguisher}, one can determine with high probability whether the parameter $\gamma$ was taken in  $\fql$ or not.  Therefore, assuming that  $\gamma$ belongs to $\fql$, the rest of the attack depends on the comparison between $\ell$ and the upper bound $k- \lceil k^2/n \rceil$ of $r$ given in Section~\ref{sec:cryptanalysis} as explained in the following Theorem.

\begin{theorem}\label{weak-keys_attack}
    Let $\CC_{\pub}$ be a $(n,k)$ public code of an instance of the LG cryptosystem defined over $\fqm$. Assume that $m$ is not prime and let $\ell$ be a factor of $m$ such that $\dim \big(\CC_{\pub}+\CC_{\pub}^{[\ell]}\big)= k+ \ell < n$. Then an alternative private key can be recovered in $T$ operations where
    \[
    T = \begin{cases}
            \mathcal{O}\left(m^3k^3(n-k)^3 \right)  \text{ if } \ell \leq k - \lceil k^2/n \rceil\\
             \mathcal{O}\left(m^3k^3(n-k)^3 \min \Big\{q^\ell, q^{2\ell - 3r}\Big\}\right) \text{ if } k - \lceil k^2/n \rceil < \ell \leq k-1
        \end{cases}
    \]
\end{theorem}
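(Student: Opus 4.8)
The plan is to split along the two cases of the displayed formula, using Lemma~\ref{Gamma_distinguisher} as the detection step and then reducing to the machinery of Section~\ref{sec:cryptanalysis}. First I would note that the hypothesis $\dim\big(\CC_{\pub}+\CC_{\pub}^{[\ell]}\big)=k+\ell<n$ together with the dichotomy of Lemma~\ref{Gamma_distinguisher} forces us (with high probability) into the case $\gamma\in\fql$; this is the structural fact that makes everything work, because it means the masking space $\VC=\support{\fq}{1,\gamma^{-2},\gamma^2}$ is actually contained in $\fql$, so by Remark~\ref{rem:basis_of_Q} the entries of $\Qm$ (and hence of every valid $\Mm=\Tm\Qm^{\mathsf T}$ from Theorem~\ref{Theo:number_of_pot_solution}) can be taken in the \emph{fixed, known} subspace $\fql$ of dimension $\ell$ over $\fq$. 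The point of the theorem is that once $\gamma$ is known to sit in a subfield, we no longer need to guess $\FC$ at random: $\fql$ itself is a canonical candidate.

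For the first case, $\ell\le k-\lceil k^2/n\rceil$: here I would take $\FC=\fql$ directly (of dimension $r=\ell$), observe that trivially $\VC\subseteq\fql$ — i.e.\ we may take $\alpha=1$ in the hypothesis of Proposition~\ref{attack_basis} — and the dimension bound $\ell\le k-\lceil k^2/n\rceil$ is exactly the condition $nrm\le mk(n-k)$ required there. So Proposition~\ref{attack_basis} applies verbatim with a single choice of $\FC$ and no guessing loop; solving the resulting linear system over $\fq$ (with $mk(n-k)$ equations and $mn\ell\le mk(n-k)$ unknowns) costs $\mathcal O\big(m^3k^3(n-k)^3\big)$ field operations, and any nonzero solution yields $\Mm'=\alpha\Mm$, hence an alternative private key exactly as in the proof of Proposition~\ref{attack_basis}.

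For the second case, $k-\lceil k^2/n\rceil<\ell\le k-1$: now $\FC=\fql$ has dimension $\ell$ that is \emph{too large} for the linear system of Proposition~\ref{attack_basis} to be overdetermined, so one reverts to a guessing strategy, but a cheaper one than in Step~1 of Section~\ref{sec:cryptanalysis} because the search is confined to $\fql$ rather than all of $\fqm$. I would present two sub-strategies and take the cheaper. (i) Guess a nonzero $\alpha\in\fql$ (equivalently, a scaling) so that $\alpha\cdot\VC$ lands inside a fixed $r$-dimensional $\FC$ built from a power basis inside $\fql$; replaying the probability computation of Step~1 but with $m$ replaced by $\ell$ gives success probability $\approx q^{-2\ell+3r}$, hence cost $\mathcal O\big(m^3k^3(n-k)^3\,q^{2\ell-3r}\big)$. (ii) Alternatively exploit that $\VC$ has a basis $\{1,\beta^2,\beta^4\}$ with $\beta\in\fql\setminus\fq$ (Remark~\ref{rem:basis_of_Q}): sampling $\beta$ uniformly in $\fql\setminus\fq$ and setting $\FC_\beta=\support{\fq}{1,\beta,\dots,\beta^{r-1}}$ succeeds with probability $\Theta(1/q^\ell)$ by the same four-case argument as in Step~1, giving cost $\mathcal O\big(m^3k^3(n-k)^3\,q^{\ell}\big)$. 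Taking the minimum of (i) and (ii) yields the $\min\{q^\ell,\,q^{2\ell-3r}\}$ factor in the statement; the per-iteration cost is again the linear-algebra cost $\mathcal O\big(m^3k^3(n-k)^3\big)$, and correctness of the recovered key is inherited from Proposition~\ref{attack_basis} and Theorem~\ref{Theo:number_of_pot_solution}.

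The main obstacle I anticipate is bookkeeping the probabilities cleanly in the second case — in particular justifying that restricting the search to $\fql$ does not lose any valid solution (this is where one needs that \emph{every} parity-check matrix of $\CC_{\pub}$ of the relevant form, not just one, has its $\Mm$-entries in $\fql$, which follows from $\Qm\in\GL_n(\fql)$ combined with the $q^m-1$ solutions of Theorem~\ref{Theo:number_of_pot_solution}), and making sure the two sub-strategies are actually available in the stated range of $\ell$ and $r$ (e.g.\ the power-basis trick needs $r$ large enough, $\ell-1\ge r-1$, to contain $\{1,\beta^2,\beta^4\}$). Everything else is a direct specialization of the already-proven Proposition~\ref{attack_basis} with $m\rightsquigarrow\ell$ in the guessing analysis.
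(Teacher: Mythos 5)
Your proposal matches the paper's proof essentially step for step: detect $\gamma\in\fql$ via Lemma~\ref{Gamma_distinguisher}, take $\FC=\fql$ and apply Proposition~\ref{attack_basis} directly when $\ell\le k-\lceil k^2/n\rceil$, and otherwise rerun Step~1 of Section~\ref{sec:cryptanalysis} with $\fqm$ replaced by $\fql$, whose success probability $\max\{q^{-2\ell+3r},q^{-\ell}\}$ gives the $\min\{q^\ell,q^{2\ell-3r}\}$ factor. The paper's own proof is exactly this (and is even terser about the probability bookkeeping you flag), so the proposal is correct and takes the same approach.
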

\begin{proof}
    We assume that $\dim \big(\CC_{\pub}+\CC_{\pub}^{[\ell]}\big)= k+ \ell$.
     let us consider first the case $\ell \leq k - \lceil k^2/n \rceil$. We  have then 
     $\ell < \min \{k, n-k\}$ and thanks to Lemma \ref{Gamma_distinguisher},  the secret parameter $\gamma$  belongs to $\fql$. As a consequence, the support $\VC$ of  $\Qm$ is included in $\fql$. 
     Therefore, $\FC = \fql$ is a good  candidate allowing us to skip Step 1 in the attack described in Section \ref{sec:cryptanalysis}. Thus, thanks to Proposition \ref{attack_basis}, one can recover an alternative private key by solving a linear system with $mk(n-k)$ equations and $mn \ell$ unknowns. 
     
    To now, if we have $k - \lceil k^2/n \rceil < \ell \leq k-1$ then $\gamma$ still belongs to $\fql$ but $\fql$ is not a good subspace candidate as $\ell$ is greater than  $r = k - \lceil k^2/n \rceil$. However, $\fqm$ can be replaced by $\fql$ in Step 1. So we are now looking for  $\FC \subset \fql$ with a success probability  given by
    $\max \left \{ q^{-2\ell+3r}, q^{-\ell} \right\}$. 
    \qed
\end{proof}

\begin{remark}
If all proper divisors\footnote{A proper divisor of $m$ denotes any positive divisor of $m$ that is different from $1$ and $m$} $\ell$ of $m$ satisfy the conditions $k+\ell < n$ and $\ell \leq k-1$ then the success probability of the algorithm is  dominated by the probability for the selected $\gamma$ to lie in one of the  subfields $\fql \subsetneq \fqm$. 
\end{remark}

We now consider the different sets of parameters for which $m$ is not prime. Notice that this is the case for all  parameters except LG-I and LG-III.  In Table~\ref{tab:weak} we also give the probability $P_w$ of generating $\gamma$ in a subfield $\fql \subsetneq \fqm$. One can remark that for each divisor $\ell$ of  $m$, the conditions $k+\ell < n$ and $\ell \leq k-1$ are always satisfied except for $m = 129$ and $\ell = 43$ in LG-V for which we have $\ell > k-1$. So the success probability in the case of LG-V is the probability that $\gamma$ lies in $\F_{2^3}$.   

\begin{table}[htbp]

 \centering

    \begin{tabular}{llrrr}
      \toprule
      \textbf{Scheme} & $(q,m,n,k)$ & $r$ & Proper divisors of $m$ & $P_w$ \\
      \midrule
      LG-II & (2, 85, 83, 29) & 18  & 5, 17  & $2^{-68}$ \\
      \midrule
      LG-IV &  (2, 117, 115, 49) & 28  & 3, 9, 13, 39 & $2^{-78}$\\ 
      \midrule
      LG-V & (2, 129, 127, 36) & 25 & 3, 43  & $2^{-126}$ \\ 
      \midrule
      LG-VI & (2, 133, 131, 34) & 25 & 7, 19 & $2^{-114}$\\ 
      \midrule
      LG-VII & (2, 85, 83, 35) & 20 & 5, 17 & $2^{-68}$ \\ 
      \midrule
      LG-VIII & (2, 91, 89, 28) & 19 & 7, 13 & $2^{-78}$\\ 
      \bottomrule \\
    \end{tabular}

\caption{Probability of generating weak keys for LG parameters. } \label{tab:weak}       
        
\end{table}

\section{Conclusion} \label{conclusion}
  In this paper, we have conducted two studies of the security of the LG cryptosystem. The first leads to a structural attack which although exponential, shows that several parameters of the LG cryptosystem have been overestimated. The second analysis demonstrates the existence of weak keys leading to a polynomial-time attack. This last attack can be easily avoided, but the first one cannot be avoided without changing the design of the system. Finally, our work can be potentially further improved, in particular the combinatorial/guessing  attack can be replaced by methods resorting  to solving methods of systems of algebraic equations.

\section*{Acknowledgments}
Etienne Burle was supported by RIN100 program funded by Région Normandie. 
Herve Tale Kalachi acknowledges the UNESCO-TWAS and the German Federal Ministry of Education and Research (BMBF) for the financial support under the SG-NAPI grant number 4500454079.
Ayoub Otmani was supported by the grant ANR-22-PETQ-0008 PQ-TLS funded by Agence Nationale de la Recherche (ANR) within France 2030 program.
\bibliographystyle{spmpsci}
\bibliography{main}

\end{document}